\title{Efficient Recovery of Block Sparse Signals via Zero-point Attracting Projection}
\name{Jingbo Liu, Jian Jin, Yuantao Gu
\sthanks{This work was partially supported by National Natural Science Foundation of China (NSFC 60872087 and NSFC U0835003). The corresponding author of
this paper is Yuantao Gu (gyt@tsinghua.edu.cn).}}
\address{State Key Laboratory on Microwave and Digital Communications\\
 Tsinghua National Laboratory for Information Science and Technology\\
 Department of Electronic Engineering, Tsinghua University, Beijing 100084, CHINA}
\begin{document}

\maketitle

\begin{abstract}
In this paper, we consider compressed sensing (CS) of block-sparse signals, i.e., sparse signals that have nonzero coefficients occurring in clusters.
An efficient algorithm, called zero-point attracting projection (ZAP) algorithm, is extended to the scenario of block CS. The block version of ZAP algorithm employs an approximate $l_{2,0}$ norm as the cost function, and finds its minimum in the solution space via iterations. For block sparse signals, an
analysis of the stability of the local minimums of this cost function under the perturbation of noise reveals an advantage of the proposed algorithm
over its original non-block version in terms of reconstruction error. Finally, numerical experiments show that the proposed algorithm outperforms other state of the art methods for the block sparse problem in various
respects, especially the stability under noise.
\end{abstract}

\begin{keywords}
Compressed sensing, sparse recovery, block sparse, zero-point attracting projection.
\end{keywords}

\section{Introduction}
Compressed sensing (CS) \cite{Donoho}, \cite{Candes} addresses the
problem of retrieving sparse signals from under-determined linear
measurements. It enjoys the advantage of reducing computational
complexity in the measurement stage, and therefore has shown a great
potential in applications such as MRI imaging \cite{imaging}, wireless communication \cite{wireless}, pattern recognition \cite{pattern}, and source coding \cite{sourcecoding}. On the part of signal
reconstruction in CS, one of the key problems is to retrieve the
sparsest solution, i.e., the minimum $l_0$ norm solution to the
equations of linear constraints:
\begin{equation}
\min \|{\bf x}\|_0~~~~\mathrm{s.t.}~~{\bf y}={\bf Ax}, \label{CS}
\end{equation}
where ${\bf x}\in {\mathbb R}^n$ is the unknown sparse signal,
$\mathbf{y}\in {\mathbb R}^m$ is the measurement, and typically
$m<n$. Unfortunately, $l_0$ norm minimization problem is generally
an NP hard problem. Previous work 
including \cite{DC} and
\cite{Donoho} have shown that under some conditions, the sparsest
solution can be obtained via convex relaxation, such as Basis
Pursuit (BP). Another popular method for CS recovery problem is
based on greedy pursuits, and its representative is orthogonal matching
pursuit (OMP) \cite{OMP}.

The block spare problem for compressed sensing was first introduced
by Eldar et.al in \cite{Eldar2}. The authors have shown that
sampling problems over unions of subspaces can be converted into
block-sparse recovery problems. Examples in applications can be
found in \cite{26}, \cite{16} and \cite{17}. Mathematically, a
block-sparse signal can be represented as follows:
\begin{equation}\label{blocksignal}
    {\bf x}=\left[{\bf x}_1^T,{\bf x}_2^T,\cdots,{\bf x}_N^T \right]^T,
\end{equation}
where ${\bf x}_i=\left[x(iD-D+1),\cdots,x(iD)\right]$ is the $i$th
block of $\bf x$ with length $D$, and $n=N\times D$. A signal is $K$
block-sparse if at most $K$ out of the $N$ blocks of the signal are
non-zero. Similar to (\ref{CS}), the $l_{2,0}$ norm minimization for
the block-sparse problem can be formulated as:
\begin{equation}\label{bcs}
    \min \|{\bf x}\|_{2,0},~~~~\mathrm{s.t.}~~{\bf y}={\bf Ax},
\end{equation}
where the $l_{p,q}$ norm of a block vector $\bf x$ is defines as:
\begin{equation}
\|{\bf x}\|_{p,q}=\sum_{k=1}^{N}{(\|{\bf x}_k\|_p)^q}.
\label{l20norm}
\end{equation}
From (\ref{l20norm}) it is clear that the $l_{2,0}$ norm can be interpreted as the number of non-zero blocks of the signal. Like the $l_0$ minimization problem (\ref{CS}), solving (\ref{bcs}) is also NP-hard. Although all the
conventional recovery algorithms in CS is also applicable to the
block CS problem, these algorithms cannot take advantage of the
essential block-sparse characteristic of signals. To remedy this,
Eldar et.al introduced two algorithms in \cite{Eldar} and
\cite{Eldar2}: the $l_{2,1}$-opt and the Block orthogonal matching
pursuits (BOMP). However, like their ancestors, these algorithms
have their inherit drawbacks: $l_{2,1}$-opt is quite slow and
becomes worse as dimension increases; BOMP is faster, but its
estimation accuracy is poorer in the presence of noise perturbation.
%

In contrast, a recently
proposed algorithm called zero-point attracting projection (ZAP)
\cite{jin} is an efficient sparse reconstruction method based on an idea different from the aforementioned convex relaxations and greedy pursuits: The authors
choose a smooth function to approximate the $l_0$ norm and then
finds its minimum in the solution space via iterations. Their
simulations show that ZAP requires fewer measurements for exact
reconstruction than the referenced algorithms in the experiment settings, while
having tractable computational complexity. It is then interesting to
explore the block-sparse reconstruction methods based on the idea of
ZAP.

In this paper, the ZAP algorithm is extended to the block sparse model.
The block ZAP algorithm (BZAP) employs a smoother cost function to
approximate the $l_{2,0}$ norm of the block sparse input, and then minimize this function via
iterations. An analysis of the stability of the local minimums of the
cost function gives a lower bound on the reconstruction error for BZAP than the original ZAP, approximately by a factor of $1/\sqrt{D}$.
Simulations show that BZAP
outperforms other state-of-the-art methods for the block sparse
problem (BOMP, $l_{2,1}$-opt) both in terms of incidence of exact
recovery in noiseless case, and the mean square deviation in the
case of noise-contaminated measurements.
%

The remainder of the paper is organized as follows: section 2
presents the formulation of the BZAP algorithm. In section 3, an
analysis of the $l_2$ stability of the local minimum of the cost
function for BZAP is offered. Section 4 presents simulation results
comparing BZAP with BOMP, $l_{2,1}$-opt and the original ZAP.
Finally, the whole paper is concluded in section 5.

\textbf{Notation.} Throughout this paper, we denote vectors by
boldface lowercase letters, and matrices by boldface upper case
letters. Given a matrix $\mathbf{A}$, $\mathbf{A}^*$ is its
Hermitian conjugate. $\mathbf{A}^{\dagger}$ denotes the pseudo inverse
of $\mathbf{A}$, that is, if $\bf A$ has full row rank or full
column rank, then
\begin{equation}
    \mathbf{A}^{\dagger}=\left\{
    \begin{array}{cl} {\bf A}^*({\bf A}{\bf A}^*)^{-1}, & {\bf A}~\text{has full row rank}; \\
    ({\bf A}^*{\bf A})^{-1}{\bf A}^*, & {\bf A}~\text{has full column rank}. \end{array}\right . \label{Falpha}
\end{equation}
Block support $T$ is a subset of $\{1\dots N\}$ indicating the
non-zero blocks of $\mathbf{x}$, and $T^c$ is its complement. We use
$\mathbf{x}_T$ to denote the vector formed by the blocks in
$\mathbf{x}$ indexed by $T$, and $\mathbf{A}_T$ the sub-matrix that
lies in the column blocks indexed by $T$.
Notation $\|\cdot\|$ takes either the Euclid norm of a vector or the $l_2$ operator norm of a matrix.

\section{Block ZAP Algorithm}
This section aims to extend the ZAP algorithm to the block sparse
problem. One chief idea of BZAP is to employ a `smoother' function:
\begin{equation}
    J({\mathbf{x}})=\sum_{k=1}^N F(\|\mathbf{x}_k\|)\label{modifiednorm1}
\end{equation}
to approximate the $l_{2,0}$ norm of $\mathbf{x}$. Of course, there
is a great liberty in the choice of the function $F$ in
(\ref{modifiednorm1}). But to reduce computation complexity, we
select
\begin{equation}
    F_{\alpha}(w)=\left\{
    \begin{array}{cl} 2\alpha |w|-{\alpha^2}w^2 & |w|\leq\frac{\textstyle1}{\textstyle\alpha}; \\
    1 & {\rm elsewhere}, \end{array}\right . \label{Falpha}
\end{equation}
in the implementations, since its derivative is linear. Now from
(\ref{l20norm}) we see that the $l_{2,0}$ norm of $\mathbf{x}$ can
be approximated as:
\begin{equation}\label{l20}
    \|\mathbf{x}\|_{2,0}\approx \sum_{k=1}^N F_{\alpha}(\|{\bf x}_k\|),
\end{equation}
So the problem (\ref{bcs}) is transferred to
\begin{equation}
    \min \sum_{k=1}^N F_{\alpha}(\|{\bf x}_k\|),~~{\rm s.t.}~~
    {\bf y}={\bf A}{\bf x}.\label{problem1}
\end{equation}
Traditional methods of steepest descent together with a `projection'
step can be used to solve (\ref{problem1}). That is, in the $t$th
iteration, the solution is updated along the negative gradient
direction of the sparse penalty, which in effect attracts the solution to the zero point:
\begin{equation}
    \tilde{\bf x}(t+1)={\bf x}(t)-\kappa\cdot\nabla{J}({\bf x}(t)).
    \label{12}
\end{equation}
Since $\tilde{\bf x}(t+1)$ is generally not in the solution space, the
next step is to project it back to the hyperplane of $\bf Ax=y$:
\begin{equation}
    {\bf x}(t+1)={\bf P}\tilde{\bf x}(t+1)+{\bf Q},
    \label{13}
\end{equation}
where $\bf P=I-A^{\dagger}A$ is named as projection matrix and $\bf
Q=A^{\dagger}y$. The attraction step (\ref{12}) and projection step
(\ref{13}) are used alternately in the iterations, hence the name of
zero-point attracting projection. The procedure of BZAP is
summarized in TABLE \ref{Tab1}.

\begin{table}[ht]
\begin{center}
\caption{Procedure Outline of BZAP}\label{Tab1}
\begin{small}
\begin{tabular}{l}
\hline
{\bf Input:} $\alpha,\kappa,{\bf A},{\bf y};$\\
{\bf Initialize BZAP:} ${\bf x}_0(0)={\bf {A^\dagger}y}, t = 0$.\\
{\bf Repeat:} (for time instant $t$); \\
~~~~Update $\tilde{\bf x}(t+1)$ with the zero attraction by (\ref{12}) and (\ref{l20});\\
~~~~Project ${\bf x}(t+1)$ back to the solution space by (\ref{13});\\
~~~~Update the index: $t=t+1$.\\
{\bf Until:} Block ZAP stop criterion is satisfied.\\
 \hline
\end{tabular}
\end{small}
\end{center}
\end{table}

Finally, we remark on the choice of parameters for the BZAP algorithm:

\emph{The choice of $\alpha$}: According to (\ref{Falpha}), parameter $\alpha$
determines the range of effect of the cost function. There is a
tradeoff in the choice of $\alpha$ since a small $\alpha$ leads to a
bad approximation of the $l_{2,0}$ norm, and produces many local
minimums, while an overly large $\alpha$ limits the effective range.
Empirically we have found that BZAP performs the best when
$1/\alpha$ is around the square root of the variation of the
non-zero entries in $\mathbf{x}$.

\emph{The choice of $\kappa$}: The step length $\kappa$ determines the speed of
convergence and the accuracy of the estimation. A large $\kappa$
will result in a fast convergence but a poor estimation. In our
simulations, $\kappa$ is decreased as the iterations approaches
convergence, in order to ensure both speed of convergence and accuracy. More
specifically, we let $\kappa$ decrease by a factor of $\eta$
($\eta<1$) whenever the cost function (\ref{modifiednorm1}) starts
to increase.

\emph{Stop conditions}: The iteration (\ref{12}) and
(\ref{13}) is terminated when any of the two following conditions is
satisfied: (a) The total number of reductions of step length
$\kappa$ reaches a predefined number $C_1$, or (b) The total number
of iterations exceeds a predefined number $C_2$.

\section{Stability of the Local Minimum Point}
In this section, we consider the problem of estimating
$\bar{\mathbf{x}}$ from the following noisy measurements:
\begin{equation}
\mathbf{y}=\mathbf{A}\bar{\mathbf{x}}+\mathbf{v}. \label{11}
\end{equation}
While in \cite{Wang}, the authors have discussed the
convergence of the ZAP iterations, in this work we mainly consider the stability of
local minimums of the cost function of BZAP under noise perturbation.

Suppose $\alpha$ satisfies
\begin{equation}
1/\alpha<\|\bar{\mathbf{x}}_k\|,\quad k\in T, \label{3}
\end{equation}
define the closed ball $B(\bar{\mathbf{x}},d)$ as a neighborhood of
$\bar{\mathbf{x}}$, where
\[
d=\min_{k\in T}(1/\alpha, \|\bar{\mathbf{x}}_k\|-1/\alpha).
\]
Let $L$ be the solution space:
\[
L:=\{{\bf x}\in {\mathbb R}^n:
\mathbf{y}=\mathbf{A}\mathbf{x}\},
\]
where $\mathbf{y}$ is the measurement given in (\ref{11}). Then,
regarding the stability of the local minimizer of (\ref{problem1}) in the
noise-contaminated measurements, we have the following theorem:
\newtheorem{theorem}{Theorem}
\begin{theorem}\label{Theorem1}
Suppose $\bar{\bf x}$ is a block sparse signal and $\mathbf{A}_T$ has full column rank, then the
minimizer $\hat{\mathbf{x}}$ of function (\ref{modifiednorm1}) in the region
$L\bigcap B(\bar{\mathbf{x}},d)$ satisfies
\begin{eqnarray}
\|\hat{\mathbf{x}}-\bar{\mathbf{x}}\|
\le2\sqrt{N}(1+\|\mathbf{A}_T^\dagger\mathbf{A}_{T^c}\|)
\|\mathbf{A}^\dagger\mathbf{v}\| +\|\mathbf{A}_T^\dagger\mathbf{v}\|.
 \label{2}
\end{eqnarray}
\end{theorem}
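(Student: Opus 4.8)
The plan is to exploit the fact that $\hat{\mathbf{x}}$ is a minimizer of $J$ over $L\cap B(\bar{\mathbf{x}},d)$ while $\bar{\mathbf{x}}$ is feasible for the \emph{noiseless} problem but generally not for $L$. First I would introduce the point $\mathbf{x}_0=\bar{\mathbf{x}}+\mathbf{A}^\dagger\mathbf{v}$ (equivalently, project $\bar{\mathbf{x}}$ onto $L$, since $\mathbf{A}\mathbf{x}_0=\mathbf{A}\bar{\mathbf{x}}+\mathbf{v}=\mathbf{y}$), so that $\mathbf{x}_0\in L$ and $\|\mathbf{x}_0-\bar{\mathbf{x}}\|=\|\mathbf{A}^\dagger\mathbf{v}\|$. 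For $\|\mathbf{v}\|$ small enough $\mathbf{x}_0$ lies in $B(\bar{\mathbf{x}},d)$, so it is a competitor to $\hat{\mathbf{x}}$, giving $J(\hat{\mathbf{x}})\le J(\mathbf{x}_0)$.

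The next step is to understand the geometry of $B(\bar{\mathbf{x}},d)$: by the definition of $d$, for $k\in T$ every point of the ball keeps $\|\mathbf{x}_k\|$ in the interior of the regime where $F_\alpha(w)=2\alpha w-\alpha^2 w^2$ is strictly increasing, and for $k\in T^c$ the block stays in the linear piece near $0$, where $F_\alpha(\|\mathbf{x}_k\|)$ behaves like $2\alpha\|\mathbf{x}_k\|$. The key quantitative facts I would extract are: (i) on the active blocks, $F_\alpha(\|\hat{\mathbf{x}}_k\|)-F_\alpha(\|\bar{\mathbf{x}}_k\|)\ge c_1(\|\hat{\mathbf{x}}_k\|-\|\bar{\mathbf{x}}_k\|)$ for a constant $c_1=c_1(\alpha,d)>0$ coming from a lower bound on $F_\alpha'$; and (ii) on the inactive blocks, $F_\alpha(\|\hat{\mathbf{x}}_k\|)\ge 2\alpha\|\hat{\mathbf{x}}_k\|-\alpha^2\|\hat{\mathbf{x}}_k\|^2\ge c_2\|\hat{\mathbf{x}}_k\|$ with $c_2>0$. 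Summing the minimality inequality $J(\hat{\mathbf{x}})\le J(\mathbf{x}_0)$ and rearranging, one bounds $\sum_{k\in T}(\|\hat{\mathbf{x}}_k\|-\|\bar{\mathbf{x}}_k\|)^+$ and $\sum_{k\in T^c}\|\hat{\mathbf{x}}_k\|=\|(\hat{\mathbf{x}}-\bar{\mathbf{x}})_{T^c}\|_{2,1}$ by $O(\sqrt{N}\|\mathbf{A}^\dagger\mathbf{v}\|)$; the $\sqrt{N}$ enters when passing between the $l_{2,1}$-type sums of the $N$ block norms and ordinary Euclidean norms via Cauchy–Schwarz.

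Having controlled the mass of $\hat{\mathbf{x}}-\bar{\mathbf{x}}$ off the support, I would then pull this back through the constraint to control the whole error. Write $\mathbf{h}=\hat{\mathbf{x}}-\bar{\mathbf{x}}$; since $\mathbf{A}\mathbf{h}=\mathbf{A}\hat{\mathbf{x}}-\mathbf{A}\bar{\mathbf{x}}=\mathbf{v}$, we get $\mathbf{A}_T\mathbf{h}_T=\mathbf{v}-\mathbf{A}_{T^c}\mathbf{h}_{T^c}$, and because $\mathbf{A}_T$ has full column rank, $\mathbf{h}_T=\mathbf{A}_T^\dagger\mathbf{v}-\mathbf{A}_T^\dagger\mathbf{A}_{T^c}\mathbf{h}_{T^c}$. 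Taking norms, $\|\mathbf{h}_T\|\le\|\mathbf{A}_T^\dagger\mathbf{v}\|+\|\mathbf{A}_T^\dagger\mathbf{A}_{T^c}\|\,\|\mathbf{h}_{T^c}\|$, and then $\|\mathbf{h}\|\le\|\mathbf{h}_T\|+\|\mathbf{h}_{T^c}\|\le\|\mathbf{A}_T^\dagger\mathbf{v}\|+(1+\|\mathbf{A}_T^\dagger\mathbf{A}_{T^c}\|)\|\mathbf{h}_{T^c}\|$. Substituting the bound $\|\mathbf{h}_{T^c}\|\le\|\mathbf{h}_{T^c}\|_{2,1}\le 2\sqrt{N}\|\mathbf{A}^\dagger\mathbf{v}\|$ obtained in the previous step yields exactly \eqref{2}.

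The main obstacle I anticipate is Step two — making the comparison $J(\hat{\mathbf{x}})\le J(\mathbf{x}_0)$ yield a \emph{clean} bound of the stated form with the explicit constant $2\sqrt{N}$ and no dependence on $\alpha$ or $d$. The honest difficulty is that $F_\alpha'$ degenerates (the slope $2\alpha(1-\alpha w)$ vanishes as $\|\mathbf{x}_k\|\uparrow 1/\alpha$), so the constant $c_1$ in fact depends on $d$; getting it to drop out of the final inequality requires carefully using the definition $d=\min_k(1/\alpha,\|\bar{\mathbf{x}}_k\|-1/\alpha)$ to balance the contributions of the two pieces of $F_\alpha$, and possibly a first-order (rather than crude) Taylor estimate of $F_\alpha$ around $\|\bar{\mathbf{x}}_k\|$. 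I would also need to double-check that $\mathbf{x}_0\in B(\bar{\mathbf{x}},d)$ is actually implied by the hypotheses (it requires $\|\mathbf{A}^\dagger\mathbf{v}\|\le d$), or else state the theorem as holding in that regime; I expect the authors either assume this implicitly or absorb it into the neighborhood definition. Everything else — the feasibility manipulation in Step three and the Cauchy–Schwarz passage between block-$l_{2,1}$ and $l_2$ norms — is routine.
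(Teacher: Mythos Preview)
Your overall strategy is the paper's: compare $\hat{\mathbf{x}}$ with the feasible competitor $\bar{\mathbf{x}}+\mathbf{A}^\dagger\mathbf{v}$ to bound $\|\mathbf{h}_{T^c}\|$, then use $\mathbf{A}_T\mathbf{h}_T=\mathbf{v}-\mathbf{A}_{T^c}\mathbf{h}_{T^c}$ and full column rank to bound $\|\mathbf{h}_T\|$. Your Step~3 is verbatim the paper's argument.

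The ``main obstacle'' you flag in Step~2, however, is illusory, and not seeing why is the one real gap in your proposal. The radius $d$ is chosen precisely so that for every $\mathbf{x}\in B(\bar{\mathbf{x}},d)$ and every $k\in T$ one has $\|\mathbf{x}_k\|\ge\|\bar{\mathbf{x}}_k\|-d\ge 1/\alpha$, whence $F_\alpha(\|\mathbf{x}_k\|)=1$. Thus the cost on the active blocks is \emph{identically constant} (equal to $|T|$) throughout the ball; there is no slope $c_1$ to track, no Taylor expansion to do, and the minimality inequality collapses to
\[
\sum_{k\in T^c}F_\alpha(\|\hat{\mathbf{x}}_k\|)\ \le\ \sum_{k\in T^c}F_\alpha(\|(\mathbf{A}^\dagger\mathbf{v})_k\|).
\]
On $T^c$ (where $\|\mathbf{x}_k\|\le d\le 1/\alpha$) the elementary sandwich $\alpha w\le F_\alpha(w)\le 2\alpha w$ applies, the $\alpha$'s cancel, and one gets $\|\hat{\mathbf{x}}_{T^c}\|_{2,1}\le 2\|\mathbf{A}^\dagger\mathbf{v}\|_{2,1}\le 2\sqrt{N}\|\mathbf{A}^\dagger\mathbf{v}\|$ with exactly the stated constant and no residual dependence on $\alpha$ or $d$.

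Your side worry about whether the competitor lies in $B(\bar{\mathbf{x}},d)$ is handled in the paper by a trivial case split: if $\|\mathbf{A}^\dagger\mathbf{v}\|\ge d$, then already $\|\mathbf{h}_{T^c}\|\le\|\mathbf{h}\|\le d\le\|\mathbf{A}^\dagger\mathbf{v}\|\le 2\sqrt{N}\|\mathbf{A}^\dagger\mathbf{v}\|$ since $\hat{\mathbf{x}}\in B(\bar{\mathbf{x}},d)$; otherwise the competitor is in the ball and the argument above goes through.
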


\begin{proof}
Let
\begin{equation}
    \delta\mathbf{x}=\hat{\mathbf{x}}-\bar{\mathbf{x}}
\end{equation}
be the difference between the local minimum of the cost function and
the real signal.
The aim is to bound $\|\delta\mathbf{x}\|$ with $\mathbf{v}$.
Obviously,
\begin{eqnarray}
\|\delta\mathbf{x}\| \le
\|\delta\mathbf{x}_T\|+\|\delta\mathbf{x}_{T^c}\|. \label{5}
\end{eqnarray}
Then we will derive bounds on
$\|\delta\mathbf{x}_T\|$ and $\|\delta\mathbf{x}_{T^c}\|$ respectively:

First, consider the bound on $\|\delta\mathbf{x}_{T^c}\|$. If
$\mathbf{x}\in B(\bar{\mathbf{x}},d)$, then
\[
\|\mathbf{x}_k-\bar{\mathbf{x}}_k\|<d, \quad k=1\dots N,
\]
and it follows from the definition of $d$ that
\begin{eqnarray}
\|\mathbf{x}_k\|>1/\alpha,&&\quad k\in T; \nonumber\\
\|\mathbf{x}_k\|<1/\alpha,&&\quad k\in T^c.
\end{eqnarray}
Therefore, with the cost functions defined in (\ref{modifiednorm1}),
(\ref{Falpha}), we have
\begin{eqnarray}
\alpha\|\mathbf{x}_{T^c}\|_{2,1} \le \sum_{k\in
T^c}F_{\alpha}(\|\mathbf{x}_k\|) \le 2\alpha
\|\mathbf{x}_{T^c}\|_{2,1}. \label{8}
\end{eqnarray}

Next, we will prove
\begin{equation}
\|\delta\mathbf{x}_{T^c}\|\le
2\sqrt{N}\|\mathbf{A}^\dagger\mathbf{v}\| \label{9}
\end{equation}
by differentiating between the following two situations:

1) If $\|\mathbf{A}^\dagger\mathbf{v}\|\ge d$, then (\ref{9})
automatically holds.

2) If $\|\mathbf{A}^\dagger\mathbf{v}\|<d$, we have
\begin{align}
\|\delta\mathbf{x}_{T^c}\|
&\le\|\delta\mathbf{x}_{T^c}\|_{2,1}\\
&\le\frac{1}{\alpha}\sum_{k\in T^c}F_{\alpha}(\|\delta\mathbf{x}_k\|)\\
&=\min_{\mathbf{x}}\frac{1}{\alpha}\sum_{k\in
T^c}F_{\alpha}(\|\mathbf{x}_k\|),
\textrm{s.t. }\|\mathbf{x}\|<d, \mathbf{A}\mathbf{x}=\mathbf{v}  \label{definition}\\
&\leq\min_{\mathbf{x}}2\|\mathbf{x}\|_{2,1}, \textrm{s.t.} \|\mathbf{x}\|<d, \mathbf{A}\mathbf{x}=\mathbf{v} \label{use8}\\
&\leq2\|\mathbf{A}^\dagger\mathbf{v}\|_{2,1} \label{feasible}\\
&\leq2\sqrt{N}\|\mathbf{A}^\dagger\mathbf{v}\|, \label{1}
\end{align}
where the definition of $\delta \mathbf{x}$ is used in the
derivation of (\ref{definition}), relation (\ref{8}) in the
derivation of (\ref{use8}), and the fact that
$\mathbf{A}^\dagger\mathbf{v}$ is a feasible point for the
constraint of (\ref{use8}) in the derivation of (\ref{feasible}).
To conclude, (\ref{9}) holds in both situations.

Finally, we derive a bound on $\|\delta\mathbf{x}_T\|$. Since
\[
\mathbf{A}_T\delta\mathbf{x}_T=\mathbf{v}-\mathbf{A}_{T^c}\delta\mathbf{x}_{T^c},
\]
we have
\[
\delta\mathbf{x}_T=\mathbf{A}_T^\dagger(\mathbf{v}-\mathbf{\mathbf{A}}_{T^c}\delta\mathbf{x}_{T^c}),
\]
therefore by triangular inequality and (\ref{9}),
\begin{align}
\|\delta\mathbf{x}_T\|&\le\|\mathbf{A}_T^\dagger\mathbf{v}\|
+\|\mathbf{A}_T^\dagger\mathbf{A}_{T^c}\delta\mathbf{x}_{T^c}\|\nonumber\\
&\le\|\mathbf{A}_T^\dagger\mathbf{v}\|+2\sqrt{N}\|\mathbf{A}_T^\dagger\mathbf{A}_{T^c}\|\|\mathbf{A}^\dagger\mathbf{v}\|_2.\label{4}
\end{align}
Then, combining (\ref{1}) and (\ref{4}) yields the final result (\ref{2}).
\end{proof}

Now, we remark on the improvement of BZAP over ZAP: Since ZAP can be seen as the $D=1$ special case of BZAP, when theorem 1 is applied to ZAP, the bound becomes
\begin{eqnarray}
\|\hat{\mathbf{x}}-\bar{\mathbf{x}}\|
\le2\sqrt{n}(1+\|\mathbf{A}_T^\dagger\mathbf{A}_{T^c}\|)
\|\mathbf{A}^\dagger\mathbf{v}\| +\|\mathbf{A}_T^\dagger\mathbf{v}\|.
 \label{2forZAP}
\end{eqnarray}
It will be shown later that the term $\|\mathbf{A}_T^\dagger\mathbf{v}\|$ in (\ref{2}) is equivalent with the error of a so-called `oracle estimator', which gives a lower bound on the mean square error for the recovery problem. The other term, $2\sqrt{n}(1+\|\mathbf{A}_T^\dagger\mathbf{A}_{T^c}\|)
\|\mathbf{A}^\dagger\mathbf{v}\|$ in (\ref{2forZAP}) is reduced by BZAP by a factor of $\sqrt{D}$ in (\ref{2}). This reveals that the reconstruction via BZAP is more stable than via ZAP in the case of noisy measurements.

\section{Simulation Results}
In this section, the proposed BZAP algorithm is compared with the
conventional ZAP, BOMP, and $l_{2,1}$-opt algorithm. In all
examples, the measurement matrix $\textbf{A}$ has $40$ rows and
$100$ columns, with independent entries following the distribution
of $\mathcal{N}(0,1)$. The block is of the size $D=4$. The locations of nonzero blocks in the
unknown sparse signal $\bar{\mathbf{x}}$ are randomly chosen, and
the values of nonzero elements are independently drawn from the
Rademacher distribution.

For the BZAP algorithm, we set $\kappa = 1$, $\alpha = 1$, $\eta =
0.1$, $C_1=4$, and $C_2=1200$; Therefore it's easily checked that condition (\ref{3}) in theorem
1 is always satisfied. For ZAP and $l_{2,1}$-opt, we adopt the same
stop conditions and control of step size as in the implementation of BZAP.

\subsection{Recovery rate for different block sparsity}
In this first experiment, the exact recovery rate for different
algorithms in the noiseless case is compared. We define exact
recovery when the squared deviation$\|{\bar{\mathbf{x}}}-\hat{\mathbf{x}}\|^2/\|{\bar{\mathbf{x}}}\|^2$
is smaller than $10^{-6}$. One thousand independent simulations are
conducted to calculate the empirical exact
reconstruction rate.

As is shown in Fig.1, the proposed BZAP algorithm outperforms all
the other referenced algorithms in the experiment condition. That is, BZAP can achieve exact reconstruction of sparse
signals when there are more non-zero elements: while
other algorithms exactly reconstruct the signal when $K$ is no more
than 3, BZAP can achieve this when $K$ is up to 4. ZAP gives a poor estimation because it is the only one of the algorithms that doest not employ the block sparse nature of the signal.

\subsection{Mean square deviation (MSD) in the presence of noise}
In this experiment, the noise-contaminated measurements is
formulated as in (\ref{11}). The observational signal-to-noise ratio
(SNR) is defined as
\begin{equation}
    {\rm SNR}=10{\lg}\left(\frac{\|\mathbf{A}\bar{\mathbf{x}}\|^2}{\|{
    {\bf v}}\|^2}\right).  \label{SNR}
\end{equation}
In the simulation the SNR ranges from $10$dB to $50$dB. The noise
vector $\mathbf{v}$ is first generated with independent entries
following the normal distribution and then re-scaled to the fit the
designed SNR.

To compare the reconstruction error, the mean-square deviation
associated with different algorithms is calculated, which is defined
as follows:
\begin{equation}
    {\rm MSD}=\frac{\textbf{E}\|\hat{\textbf{x}}-\bar{\textbf{x}}\|^2}{\textbf{E}\|\bar{\textbf{x}}\|^2}.
    \label{MSD}
\end{equation}
To calculate the empirical expectation in (\ref{MSD}), we take the average of the squared norms over $10^5$ independent simulations.

Regarding the MSD lower bound, consider the following oracle
estimator: suppose the support $T$ is known, then the minimum
variance unbiased estimate of $\mathbf{x}$ is the least square
estimate:
\begin{equation}
\hat{\mathbf{x}}=\mathbf{A}_T^{\dagger}\mathbf{y}.
\end{equation}
The reconstruction error is $\|\mathbf{A}_T^\dagger\mathbf{v}\|$, therefore the MSE is given by
\begin{equation}
\mathbf{E}(\|\hat{\mathbf{x}}-\bar{\mathbf{x}}\|^2)=\sigma^2\mathrm{tr}[(\mathbf{A}_T^*\mathbf{A}_T)^{-1}],
\end{equation}
which should be lower than the achievable MSE for any practical estimators.

The simulation results are shown in Fig.2. In this
experiment, the proposed BZAP algorithms again outperforms other
estimators in terms of MSD, and in fact closely follows the oracle
bound. The BOMP algorithm, although guarantees higher exact recovery
rate in the noiseless case, is very unstable under the
perturbation of noise.

\section{Conclusion}
In this paper, we have extended the ZAP algorithm to the
block-sparse problem, by introducing a cost function to approximate
the $l_{2,0}$ norm of the signal. The stability of the local minimum
of the cost function in BZAP is studied, which reveals an advantage
of BZAP over the original ZAP by employing block sparsity of block sparse signals.
Finally, simulation
results show that BZAP out-performs BOMP, $l_{2,1}$-opt and
the original ZAP both in terms of the incidence of exact recovery in the
noiseless case, and the mean square deviation in the noisy measurements.


{\begin{figure}[!htb]
\begin{center}
\includegraphics[width=3.2in]{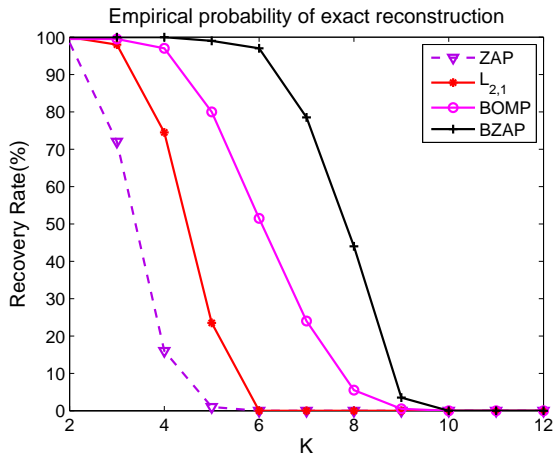}
\caption{Recovery of an input signal from $\bf y=Ax$, where $\bf x$ is a block sparse signal with a block sparsity level of $K$.}
\end{center}
\end{figure}}

{\begin{figure}[!htb]
\begin{center}
\includegraphics[width=3.2in]{}
\caption{Recovery of an input signal from $\bf y=Ax+v$, where $\bf x$ is a block sparse signal with a block sparsity level of $K=4$.}
\end{center}
\end{figure}}


\end{document}